\documentclass[letterpaper,10 pt, conference]{ieeeconf}
\IEEEoverridecommandlockouts                               
\overrideIEEEmargins

\usepackage{fancyhdr}
\usepackage{hyperref}
\usepackage{xspace}
\usepackage[font={small}]{caption}
\usepackage{color}
\usepackage{setspace}
\usepackage{enumerate}
\usepackage{graphics}
\usepackage{cite}
\usepackage{dsfont}
\usepackage{latexsym}
\usepackage{amsmath}
\usepackage{amssymb}
\usepackage{amsfonts}
\usepackage{amsthm}
\usepackage{times}
\usepackage{sgame}
\usepackage{color}
\usepackage{verbatim}
\usepackage{xcolor}
\usepackage{pgfplots}
\usepackage{epsfig} 
\usepackage{subcaption}
\usepackage{hyperref}

\usepackage{multirow,array}

\usepackage{enumitem}
\newlist{inparaenum}{enumerate}{2}
\setlist[inparaenum]{nosep}
\setlist[inparaenum,1]{label=\bfseries\arabic*.}
\setlist[inparaenum,2]{label=\arabic{inparaenumi}\emph{\alph*})}

\def\mbb{\mathbb}

\def\mcc{\mathcal{C}}
\def\mcd{\mathcal{D}}

\def\dSP0{\delta_{SP0}}
\def\dRT0{\delta_{RT0}}
\def\dPS1{\delta_{PS1}}
\def\dTR1{\delta_{TR1}}
\def\half{\frac{1}{2}}
\def\prt{\partial}
\def\dgdx{\frac{\partial g}{\partial x}}
\def\dgdn{\frac{\partial g}{\partial n}}

\def\tr{\text{tr}}

\def\inv{{-1}}
\def\int{ {\text{int}} }
\def\toc{\text{t}}

\def\limb{\lim_{\beta\rightarrow\infty}}

\theoremstyle{plain}
\newtheorem{theorem}{Theorem}[section]

\newtheorem{proposition}{Proposition}[section]
\newtheorem{lemma}{Lemma}[section]

\newtheorem{assumption}{Assumption}

\def\bs{\boldsymbol}

\normalsize\title{\LARGE \bf
	The madness of people: rational learning in feedback-evolving games
	\thanks{ }}

\author{
	Keith Paarporn
	\thanks{K. Paarporn is with the Department of Computer Science, University of Colorado, Colorado Springs. Contact: \texttt{ \{kpaarpor\}@uccs.edu}.
	}
}

\begin{document}
\thispagestyle{plain}
\pagestyle{plain}

\maketitle

\begin{abstract}
 	The replicator equation in evolutionary game theory describes the change in a population's behaviors over time given suitable incentives. It arises when individuals make decisions using a simple learning process -- imitation. A recent emerging framework builds upon this standard model by incorporating game-environment feedback, in which the population's actions affect a shared environment, and in turn, the changing environment shapes incentives for future behaviors. In this paper, we investigate game-environment feedback when individuals instead use a boundedly rational learning rule known as logit learning. We characterize the resulting system's complete set of fixed points and their local stability properties, and how the level of rationality determines overall environmental outcomes in comparison to imitative learning rules. We identify a large parameter space for which logit learning exhibits a wide range of dynamics as the rationality parameter is increased from low to high. Notably, we identify a bifurcation point at which the system exhibits stable limit cycles. When the population is highly rational, the limit cycle collapses and a tragedy of the commons becomes stable.	
\end{abstract}

\section{Introduction}

The ``Tragedy of the Commons" refers to a scenario in which individuals acting according to their own self-interest leads to the destruction of a shared common resource \cite{Hardin_1968}. The originating example describes a group of cattle herders that share a common pasture land, which becomes overgrazed as each herder allows more of their cows to use it. Indeed, individual incentives are often mis-aligned with collective benefits that could be realized through mutual cooperation. It is relevant to many scenarios: there are economic and personal costs in reducing emissions, quarantining during a pandemic, and conserving resources such as water or gas \cite{ostrom1990governing}.

Game theory is a powerful tool that can predict population-level behaviors provided that individuals' incentives can be modeled. Classical  formulations predict outcomes when the  incentives are static, i.e. they do not change over time. However, actions have consequences on the environment, and a changing environment in turn affects individuals' incentives. For example, when the prevalence of an infectious disease is high, people will tend to stay at home as it becomes more likely to get infected. When the prevalence becomes lower, people will start to resume normal activities -- however, this can encourage the spread of new infections \cite{funk2010modelling,weitz2020awareness}. An emerging framework  termed "feedback-evolving games" incorporates a dynamic coupling between population-level behaviors and its impact on environmental states \cite{weitz2016oscillating}.

Feedback-evolving games constitutes a flexible framework capable of modeling the coupling between population behaviors and relevant environmental systems, such as social behaviors in epidemics, behaviors in climate change, and consumption of common resources \cite{satapathi2023coupled,khazaei2021disease,frieswijk2022modeling,tilman2020evolutionary}. Extensive research has characterized many possible dynamics that can emerge \cite{weitz2016oscillating,tilman2020evolutionary,gong2022limit,stella2022lower,paarporn2023sis}. These feedback-evolving models primarily consider population behaviors that are governed by the replicator dynamics (notable exceptions are \cite{stella2021impact,arefin2021imitation}). The replicator dynamic arises from  simple imitative learning rules at the individual level: an agent changes its action if it observes another agent that is more successful using a different action. The main assumptions underlying imitative learning is that agents do not utilize sophisticated cognitive abilities to make a decision \cite{sandholm2010population}. However, people sometimes make rational choices (i.e. payoff-maximizing), and sometimes make irrational ones (suboptimal) due to noise in their decision-making or available information \cite{blume2003noise}.

In this paper, we consider a feedback-evolving game where agents make boundedly rational choices. Instead of imitation, agents are logit learners. The logit rule is parameterized by a rationality parameter $\beta \geq 0$. When $\beta = 0$, agents blindly choose an action uniformly at random. As $\beta$ becomes higher, agents make a payoff-maximizing choice at higher rates, and a suboptimal choice at lower rates. In the limit of large $\beta$, the logit rule converges to a best-response. Logit learning is fundamentally different from imitation, as it requires agents to have access to information about payoffs from all strategies. The well-known algorithm called ``log-linear learning" in finite player settings has extensively been studied in regards to its convergence properties in potential games \cite{marden2012revisiting,tatarenko2014proving,blume1995statistical} and networked coordination games \cite{auletta2012metastability,paarporn2020impact,paarporn2020risk,zhang2023rationality}.

The primary contribution of this paper is the analysis of the dynamics induced by logit learning for varying levels of the rationality parameter. We focus our study on a parameter regime in which imitative learning is known to lead to a tragedy of the commons as the globally stable outcome-- all agents defect and the environment collapses. Thus, our study is also aimed at determining the effectiveness of rational learning in stabilizing more desirable environmental outcomes.  Interestingly, the logit system exhibits a variety of dynamics that range from a tragedy of the commons to limit cycles. A summary of our results is depicted in Figure \ref{fig:summary}.


We provide preliminary background on feedback-evolving games  in Section \ref{sec:prelim}. The proposed logit dynamics are presented in Section \ref{sec:logit_model}. Here, we identify the complete set of fixed points of this system and conditions for their stability (Theorem \ref{thm:logit_FPs}). We note that the set of fixed points differ from those in the original, imitative system. In Section \ref{sec:bif}, we more closely analyze properties of an interior fixed point. We identify the rationality level where it undergoes a Hopf bifurcation, which gives rise to stable limit cycles (Theorem \ref{thm:Hopf}).

\section{Background: feedback-evolving games}\label{sec:prelim}

\begin{figure}[t]
	\centering
	\includegraphics[scale=.33]{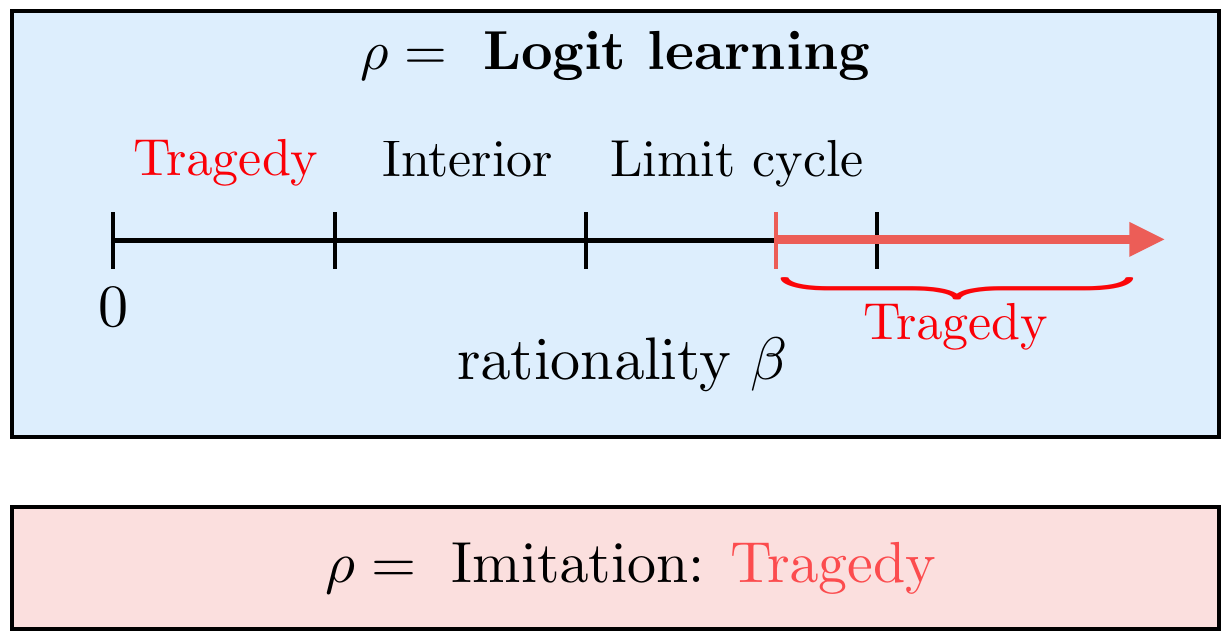}
	\caption{Contributions: summary of dynamics of logit learning in feedback-evolving games. In increasing order of the population's rationality $\beta$, the system exhibits: 1) a tragedy of the commons (TOC), 2) an interior fixed point (sustained non-zero resource), 3) a bifurcation into stable and growing limit cycles, and  4) again a TOC. The system also exhibits bistability between a tragedy and a  limit cycle. When agents instead follow imitative protocols, the dynamics always leads to a tragedy. }
	\label{fig:summary}
\end{figure}

A feedback-evolving game considers a population of agents whose actions have consequences on the abundance of an environmental state or shared resource, $n\in[0,1]$ (Figure \ref{fig:FEG}). At any given time, an agent chooses whether to cooperate ($\mcc$) or defect ($\mcd$). The defect action degrades $n$ (e.g. high resource consumption), and the cooperate action contributes to improving $n$ (e.g. restrained consumption). The immediate payoff available to each agent is dependent on the current environmental condition:
\begin{equation}
	A_n = n\begin{bmatrix} R_1 & S_1 \\ T_1 & P_1 \end{bmatrix} + (1-n)\begin{bmatrix} R_0 & S_0 \\ T_0 & P_0 \end{bmatrix}
\end{equation}
The $2\times 2$ payoff matrix describes the immediate rewards that are available to the agents, where the first strategy corresponds to an agent adopting $\mcc$, and the second strategy corresponds to an agent adopting $\mcd$. Denoting $x \in [0,1]$ as the fraction of cooperating agents in the population, the reward to a cooperating and defecting agent is given by 
\begin{equation}
	\pi_\mcc(x,n) = [A_n [x,1-x]^\top ]_1, \quad \pi_\mcd(x,n) = [A_n [x,1-x]^\top ]_2
\end{equation}
respectively. We will denote the payoff difference between cooperation and defection as: 
\begin{equation}
	\begin{aligned}
		g(x,n) &:= \pi_\mcc(x,n) - \pi_\mcd(x,n) \\
		&= a xn + bx + cn + d
	\end{aligned}
\end{equation}
where we denote
\begin{equation}
	\begin{aligned}
		a &:= \dSP0 - \dRT0 + \dPS1 - \dTR1 \\
		b &:= \dRT0 - \dSP0 \\
		c &:= -(\dPS1 + \dSP0) \\
		d &:= \dSP0.
	\end{aligned}
\end{equation}
and $\dTR1 = T_1-R_1$, $\dPS1 = P_1 - S_1$, $\dRT0 = R_0 - T_0$, and $\dSP0 = S_0 - P_0$. The $\delta$ constants are referred to as payoff parameters. 

\begin{assumption}\label{assume:A1}
	Defection is the dominant strategy in the $2\times 2$ game that corresponds to the payoff matrix $A_1$. In particular, $\dTR1  > 0$ and $\dPS1 > 0$.
\end{assumption}
The above assumption is widely adopted in the feedback-evolving games literature. It asserts that agents have more incentives to consume resources when they are abundant ($n=1$).

\begin{figure}[t]
	\centering
	\includegraphics[scale=.33]{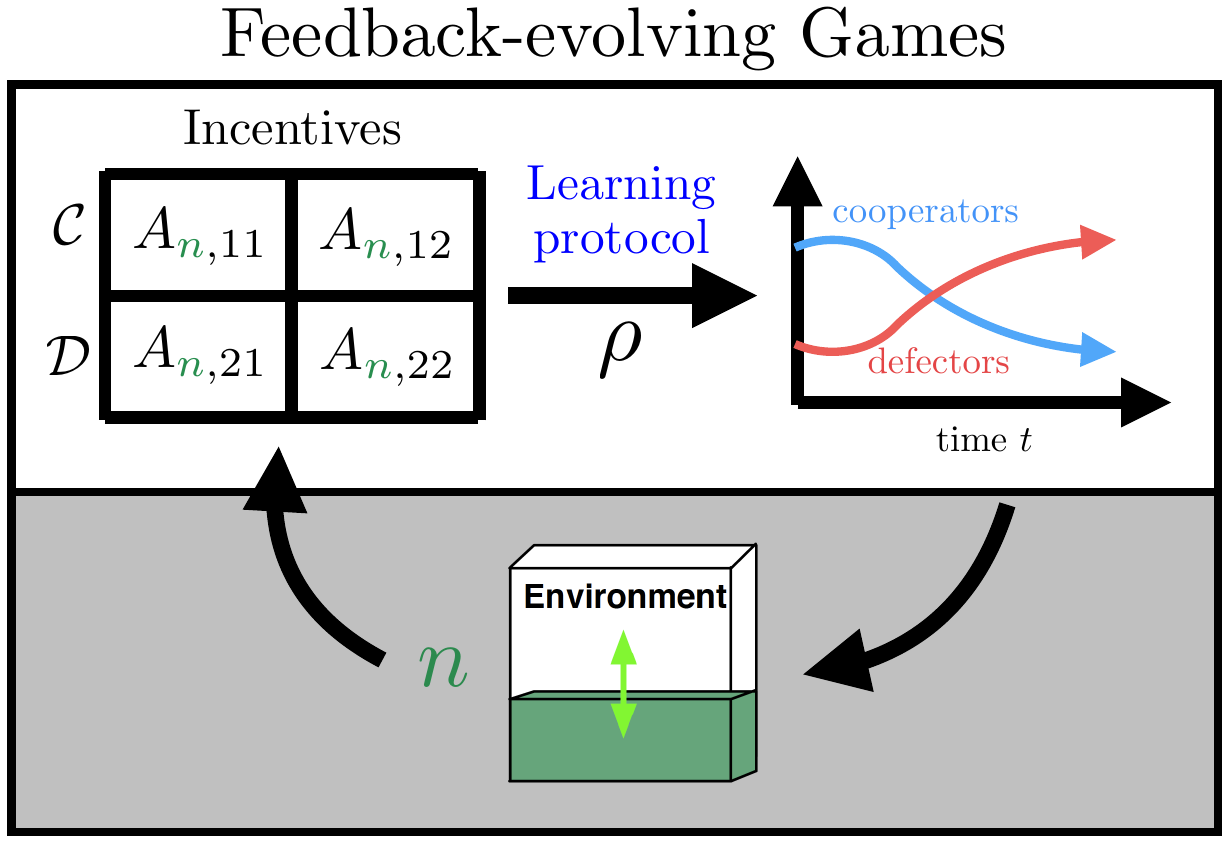}
	\caption{The feedback-evolving games framework. Individuals' actions affect a shared environment, which shapes incentives for future actions. The agents' decision-making process is specified by the learning protocol $\rho$. }
	\label{fig:FEG}
\end{figure}

The agents dynamically update their decisions over time. In the standard analyses of feedback-evolving games, agents are assumed to follow a revision protocol that induces the replicator dynamics. A revision protocol is a description of the behavioral dynamics of agents in the population. It is specified by a function $\rho_{ij}(x,n)$ that gives the rate at which an agent currently adopting strategy $i$ switches to strategy $j$. For a given revision protocol $\rho_{ij}$, the \emph{mean population dynamics} describing the change in cooperator fraction over time is generically given by the rate equation
\begin{equation}\label{eq:mean_dynamics}
	\dot{x} = (1-x)\rho_{\mcd\mcc}(x,n) - x\rho_{\mcc\mcd}(x,n).
\end{equation}
Imitative revision protocols induce the replicator dynamics -- one such example is the imitative pairwise comparison protocol
\begin{equation}
	\rho_{\mcd\mcc}(x,n) = x [g(x,n)]_+, \quad \rho_{\mcc\mcd}(x,n) = (1-x) [-g(x,n)]_+
\end{equation}
where $[a]_+ = \max\{0,a\}$. A defecting agent will switch to cooperate only if $g(x,n) > 0$, and a cooperating agent will switch to defection only if $g(x,n) < 0$. The overall coupled game-environment system dynamics considered in \cite{weitz2016oscillating} is then given by
\begin{equation}\label{eq:standard}
	\begin{aligned}
		\dot{x} &= x(1-x)g(x,n) \\
		\dot{n} &= \epsilon n(1-n)(\theta x - (1-x))
	\end{aligned}\tag{ID}
\end{equation}
The form of the environmental dynamics $F_n(x,n)$ is referred to as the \emph{tipping point dynamics}, and has been extensively studied in the literature \cite{gong2022limit,stella2021impact,stella2022lower,tilman2020evolutionary,weitz2016oscillating}. The environment does not improve unless a sufficient fraction $(1+\theta)^\inv$ of the population cooperates. The parameter $\epsilon > 0$ is a time-scale separation constant. The cooperators help restore the environment at the rate $\theta$, and defectors degrade $n$ at a unit rate. The state $\bs{z} = (x,n)$ evolves over the state space $\Gamma := [0,1]^2$. By inspection, one can verify that $\Gamma$ is forward-invariant with respect to the dynamics \eqref{eq:standard}. We will classify two types of fixed points of the feedback-evolving system \eqref{eq:standard}. A \emph{tragedy of the commons (TOC)} is a fixed point of the form $\bs{z}_\toc = (x_\toc,0)$. Such an outcome indicates that the environmental resource has totally collapsed. A \emph{prosperity} fixed point is of the form $(x,1)$. An \emph{interior fixed point} is one such that $\bs{z}^* \in \text{int}(\Gamma)$, i.e. $x^*,n^* \in (0,1)$. The imitative system \eqref{eq:standard} has four corner fixed points $(0,0), (0,1), (1,0), (1,1)$, and under some parameter regimes, a unique interior fixed point.

The goal of this paper is to characterize how the above system dynamics qualitatively differ when agents follow an alternate revision protocol known as \emph{logit learning}, which endows agents with some degree of rationality. Our comparative analysis will focus on a parameter regime in which the standard imitative dynamics \eqref{eq:standard} leads to a tragedy of the commons. First, we will assume
\begin{assumption}\label{assume:theta}
	The replenishment rate $\theta < 1$. 
\end{assumption} 
In words, defection degrades the resource faster than cooperation restores it. Consequently, an irrational population will cause a tragedy of the commons. Moreover, we consider the following condition on payoff parameters in the collapsed state:
\begin{assumption}\label{assume:A0}
	We will consider payoff parameters $\dSP0 < 0$ and $\dRT0 > - \dSP0$.
\end{assumption}

Under Assumptions \ref{assume:theta} and \ref{assume:A0}, it is established in \cite{weitz2016oscillating} that the tragedy fixed point $(0,0)$ is globally attracting, and there exists a unique and unstable interior fixed point
\begin{equation}\label{eq:nbar}
	x_\int := \frac{1}{1+\theta}, \quad \bar{n} := \frac{\dRT0 + \theta \dSP0}{\dRT0 + \dTR1 + \theta(\dSP0 + \dPS1)}
\end{equation}


\section{Model: logit learning}\label{sec:logit_model}

Suppose agents follow a perturbed best-response dynamic called the logit protocol \cite{sandholm2010population}. This is a departure from usual considerations that the agents are imitative learners. The logit revision protocol is given by
\begin{equation}
	\begin{aligned}
	\rho_\mcc(x,n) &=  \frac{e^{\beta \pi_\mcc(x,n)}}{ e^{\beta \pi_\mcc(x,n)} + e^{\beta \pi_\mcd(x,n)} } \\ \rho_\mcd(x,n) &=  \frac{e^{\beta \pi_\mcd(x,n)}}{ e^{\beta \pi_\mcc(x,n)} + e^{\beta \pi_\mcd(x,n)} }
	\end{aligned}
\end{equation}
where $0 \leq \beta < \infty$ is the rationality parameter of an agent. The logit protocol is fundamentally different from imitative protocols. The protocol $\rho_\mcc$ (resp. $\rho_\mcd$) describes the switch rate to strategy $\mcc$ ($\mcd$) for any agent in the population.  For low values of $\beta$, agents choose their actions uniformly at random, and for high values of $\beta$, they select the payoff-maximizing action with a probability close to 1. From \eqref{eq:mean_dynamics}, the logit protocol induces the mean dynamics
\begin{equation}
	\begin{aligned}
		\dot{x} &= (1-x)\rho_{\mcc}(x,n) - x\rho_{\mcd}(x,n) \\
		&= (1-x)\rho_{\mcc}(x,n) - x(1-\rho_{\mcc}(x,n)) \\
		&= \rho_\mcc(x,n) - x
	\end{aligned}
\end{equation}
and overall, the coupled game-environment system dynamics are:
\begin{equation}\label{eq:logit_system}
	\boxed{
	\begin{aligned}
		\dot{x} &= F_1(x,n) := \frac{e^{\beta g(x,n)}}{1 + e^{\beta g(x,n)}} - x \\
		\dot{n} &= F_2(x,n) := n(1-n)(\theta x - (1-x))
	\end{aligned}\tag{LD}}
\end{equation}
The  system \eqref{eq:logit_system} will be the main focus of this paper. One can verify that the state space $\Gamma$ is forward-invariant through an application of Nagumo's Theorem: whenever $x = 0$ or $x=1$, $\dot{x} > 0$ or $\dot{x} < 0$, respectively. Moreover, when $n=0$ or $n=1$, it holds that $\dot{n} = 0$. 

\subsection{Characterization of fixed points}

We will classify a \emph{logit interior fixed point} as a fixed point of system \eqref{eq:logit_system} that satisfies $\bs{z}_l^* = (x^*,n^*) \in \text{int}(\Gamma)$. We observe that when $\beta = 0$, every agent chooses an action uniformly at random, and thus in equilibrium, $x = 1/2$. The equilibrium environmental state is then determined by the value of $\theta$: under the assumption $\theta < 1$, a completely irrational population  causes a tragedy $n=0$.

It is important to note that the logit system does not share any of the  fixed points as the imitative system \eqref{eq:standard}. However, we may still classify fixed points as either TOC or interior. The complete set of fixed points of \eqref{eq:logit_system} is characterized in the result below.
\begin{theorem}\label{thm:logit_FPs}
	The fixed points of system \eqref{eq:logit_system} are characterized as follows. 
	\begin{enumerate}
	\item Suppose $\beta = 0$.  If $\theta \neq 1$, then there are two fixed points $({\half},0)$ and $({\half},1)$. If $\theta = 1$, then a line of equilibria $({\half},n)$ for all $n \in [0,1]$ exists.
	\item Suppose $\beta > 0$. A unique  interior fixed point $\bs{z}_\int^* = (x_\int,n_\int)$ exists if and only if
	\begin{equation}\label{eq:logit_interior_condition}
		\beta \geq \beta_\int := \frac{(1+\theta)\log \theta^\inv}{\dRT0 + \theta\dSP0}
	\end{equation}
	where $x_\int$ was given in \eqref{eq:nbar} and $n_\int$ is given by
	\begin{equation}\label{eq:logit_FP}
		n_\int = \frac{\dRT0 + \theta \dSP0 - \beta^\inv(1+\theta) \log \frac{1}{\theta} }{\dRT0 + \theta\dSP0 + \dTR1 + \theta\dPS1} \in (0,1)
	\end{equation}
	
	If $\beta < \beta_\int$, no interior fixed points exist. 
	\item Suppose $\beta > 0$. There exists a $\hat{\beta} > 4/b$ such that:
	\begin{enumerate}
		\item For all $\beta\in (0,\hat{\beta})$, there is a unique TOC fixed point $x_{\toc3}(\beta)$. It holds that it is strictly increasing in $\beta$, $x_{\toc3}(\beta) > {\half}$, $x_{\toc3}(\beta_\int) = x_\int$, and $\limb x_{\toc3}(\beta) = 1$.
		\item For all $\beta \geq \hat{\beta}$, there are three TOC fixed points $x_{\toc1}\leq x_{\toc2} < {\half} < x_{\toc3}$, where the equality holds if and only if $\beta = \hat{\beta}$. It holds that $x_{\toc1}$ is strictly decreasing with $\limb x_{\toc1}(\beta) = 0$, and $x_{\toc2}$ is strictly increasing with $\limb x_{\toc2}(\beta) = \frac{|\dSP0|}{\dRT0}$.
	\end{enumerate}


	\item Suppose $\beta > 0$. There exists a unique prosperity fixed point of the form $(x^*,1)$, where $x^* < {\half}$ is strictly decreasing in $\beta$ and $\limb x^*(\beta) = 0$.
	\end{enumerate}
\end{theorem}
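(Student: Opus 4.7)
The overall strategy is to decouple the fixed-point problem via the factorization $F_2(x,n) = n(1-n)(\theta x - (1-x)) = 0$, which partitions fixed points into three disjoint classes: tragedy ($n=0$), prosperity ($n=1$), and interior ($x = 1/(1+\theta) = x_\int$). In each class the surviving condition $F_1 = 0$ becomes a scalar equation in one variable. Part 1 is then immediate: $\beta = 0$ makes $\rho_\mcc \equiv 1/2$, forcing $x = 1/2$, after which $F_2(1/2, n) = n(1-n)(\theta - 1)/2$ yields the claimed zeros (and the line of equilibria when $\theta = 1$).

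For Parts 2 and 4, I would invert the sigmoid to rewrite $F_1 = 0$ as $\beta g(x,n) = \log(x/(1-x))$. Substituting $x = x_\int$ and using $x_\int/(1-x_\int) = 1/\theta$ gives $\beta g(x_\int, n) = \log(1/\theta)$; since $g(x_\int,\cdot)$ is affine in $n$ with root $\bar n$ from \eqref{eq:nbar}, solving this linear equation delivers \eqref{eq:logit_FP}. The constraint $n_\int \in (0,1)$ then reduces algebraically to $\beta \geq \beta_\int$, with $n_\int < 1$ automatic under Assumption \ref{assume:A1}. For Part 4, I would compute $g(x,1) = (\delta_{PS1} - \delta_{TR1})x - \delta_{PS1}$ and observe $g(0,1) = -\delta_{PS1} < 0$ and $g(1,1) = -\delta_{TR1} < 0$ by Assumption \ref{assume:A1}, so by linearity $g(\cdot, 1) < 0$ on $[0,1]$; hence $\sigma(\beta g(x,1)) < 1/2$ and any prosperity fixed point obeys $x^* < 1/2$. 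Existence follows by IVT on $\Phi(x) := \sigma(\beta g(x,1)) - x$, since $\Phi(0) > 0 > \Phi(1)$; uniqueness and the monotonicity $dx^*/d\beta < 0$ come from implicit differentiation after noting that at any zero $\Phi$ must cross downward, making the denominator $1 - \beta(\delta_{PS1} - \delta_{TR1})\sigma'$ positive.

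Part 3 is the technical core. With $n = 0$, the relevant map is $\Phi_\beta(x) := \sigma(\beta(bx+d)) - x$, where $b = \delta_{RT0} - \delta_{SP0} > 0$ and $d = \delta_{SP0} < 0$; the sigmoid's inflection lies at $x_* := -d/b$, which Assumption \ref{assume:A0} places in $(0,1/2)$ and which further ensures $\Phi_\beta(1/2) > 0$ for all $\beta$. On the concave branch $(x_*, 1)$, $\Phi_\beta$ is concave with $\Phi_\beta(x_*) = 1/2 + d/b > 0$ and $\Phi_\beta(1) < 0$, so a unique root $x_{\toc3}(\beta) > 1/2$ exists. On the convex branch $(0, x_*)$, $\Phi_\beta$ is convex and positive at both endpoints, so it has zero or two roots. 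Analyzing $\Phi_\beta'(x) = \beta b \sigma'(\cdot) - 1$ with $\sigma' \leq 1/4$ shows that for $\beta \leq 4/b$ the derivative is nonpositive on $[0, x_*]$, so $\Phi_\beta$ is strictly monotone there and no interior roots exist; for slightly larger $\beta$, a unique interior minimum emerges, and as $\beta \to \infty$ the minimum value tends pointwise to $-x < 0$. Continuity then produces a unique saddle-node value $\hat\beta$ at which the minimum first touches zero, strictly greater than $4/b$ because at $4/b$ the infimum still equals the positive value $1/2 + d/b$. For $\beta > \hat\beta$, the double root splits into two simple zeros $x_{\toc1}(\beta) < x_{\toc2}(\beta)$ in $(0, x_*) \subset (0, 1/2)$. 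Monotonicity in $\beta$ on each branch follows from implicit differentiation, $dx/d\beta = \sigma'(u)(bx + d)/(1 - \beta b \sigma'(u))$, whose sign is determined by which side of $x_*$ the root lies on and by the sign of $\Phi_\beta'$ at the crossing (outer branches have positive denominator; the middle branch, being an unstable crossing, has negative denominator, which combined with $bx + d < 0$ gives a positive derivative). The $\beta \to \infty$ limits fall out of asymptotic analysis of $\log(x/(1-x)) = \beta(bx + d)$: the outer branches drive $x$ toward $0$ and $1$, while the middle branch is forced to $x_*$. Finally, $x_{\toc3}(\beta_\int) = x_\int$ is the matching identity at the threshold $n_\int = 0$, verifiable by direct substitution, $\sigma(\log(1/\theta)) = 1/(1+\theta) = x_\int$.

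The main obstacle is the bookkeeping in Part 3, in particular establishing the strict inequality $\hat\beta > 4/b$ and ruling out non-generic configurations such as higher-order tangencies. The convexity/concavity partition at $x_*$ is what makes the analysis tractable: it caps the root count at one on $(x_*, 1)$ and at two on $(0, x_*)$, reducing the global characterization to the shape analysis of a convex curve with two positive endpoints, after which the standard saddle-node picture determines the bifurcation diagram completely.
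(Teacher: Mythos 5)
Your proposal is correct, and it reaches the same conclusions by a route that is dual to the paper's. The paper inverts the logistic response and studies intersections of the affine payoff gap with the logit transform $T_\beta(x) = \beta^\inv\log\frac{x}{1-x}$, leaning on Lemma \ref{lem:Tbeta} (concave on $(0,\half)$, convex on $(\half,1)$) to cap root counts; you instead keep the sigmoid $\sigma(u)=e^u/(1+e^u)$ and analyze $\Phi_\beta(x)=\sigma(\beta(bx+d))-x$ directly, splitting at the sigmoid's inflection $x_*=-d/b\in(0,\half)$ rather than at $\half$. The two decompositions are equivalent in power, but yours buys a few things the paper leaves implicit: a tighter localization $x_{\toc1}\leq x_{\toc2}<x_*$ of the two small TOC roots (the concavity of $\Phi_\beta$ on $(x_*,\half)$ together with $\Phi_\beta(x_*),\Phi_\beta(\half)>0$ rules out roots there), an explicit verification of $x_{\toc3}(\beta_\int)=x_\int$ via $\sigma(\log\theta^\inv)=(1+\theta)^\inv$, and a clean argument for the \emph{strict} inequality $\hat{\beta}>4/b$ (the paper's proof only asserts $\hat\beta\geq 4/b$). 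One small point you should make explicit: to conclude that the two-root regime persists for \emph{all} $\beta>\hat\beta$ (not merely that the minimum ``first touches zero''), note that $\partial_\beta\Phi_\beta(x)=(bx+d)\,\sigma'(\beta(bx+d))<0$ pointwise on $(0,x_*)$, so the minimum of $\Phi_\beta$ over that interval is strictly decreasing in $\beta$; this makes $\hat\beta$ a genuine single threshold. Finally, your asymptotics give $\limb x_{\toc2}(\beta)=x_*=\frac{|\dSP0|}{\dRT0+|\dSP0|}$, which agrees with the paper's own proof (which sends $x_2\to x_0=-d/b$) but not with the value $\frac{|\dSP0|}{\dRT0}$ printed in the theorem statement; the statement appears to carry a typo, and your derivation is the consistent one.
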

The environmental level at the interior fixed point $n_\int$ is monotonically increasing in the rationality $\beta > \beta_\int$. At the threshold $\beta = \beta_\int$, $n_\int = 0$, and as $\beta \rightarrow \infty$, the level $n_\int$ approaches $\bar{n}$, the interior fixed point from the imitative system \eqref{eq:nbar}. For identifying TOC fixed points, the fixed point equation \eqref{eq:logit_system} is transcendental, and thus its solutions cannot be expressed generally in closed form. Consequently, the precise value of $\hat{\beta}$ in item 3) cannot be analytically derived.

\subsection{Stability properties of fixed points}

To conclude this section, we establish the stability properties of all fixed points except the interior FP, which we will closely investigate in the next section. When $\beta=0$, the only two fixed points are $(1/2,0)$ and $(1/2,1)$. From Assumption \ref{assume:theta}, we immediately deduce that $(1/2,0)$ is stable and $(1/2,1)$ is unstable. Since no interior fixed point exists at $\beta = 0$, we may also conclude that $(1/2,0)$ is globally attractive by invoking Poincare-Bendixson Theorem (there cannot be any orbits in $\int \ \Gamma$). It will be useful to derive the entries of the Jacobian $J(x,n)$ of \eqref{eq:logit_system}. They are:
\begin{equation}\label{eq:Jacobian}
    		\begin{aligned}
    			J_{11} = \frac{\partial F_1}{\partial x} &= \frac{\beta \dgdx(n) e^{\beta g(x,n)} }{(1+e^{\beta g(x,n)})^2} - 1 \\
    			J_{12} = \frac{\partial F_1}{\partial n} &= \frac{\beta \dgdn(x) e^{\beta g(x,n)} }{(1+e^{\beta g(x,n)})^2}  \\
    			J_{21} = \frac{\partial F_2}{\partial x} &= \epsilon n(1-n)(1+\theta) \\
    			J_{22} = \frac{\partial F_2}{\partial n} &= \epsilon (1-2n)(\theta x - (1-x))
    		\end{aligned}
\end{equation}

The following result details the stability properties of TOC and prosperity fixed points when $\beta > 0$.

\begin{proposition}\label{prop:FP_stability}
	Suppose $\beta > 0$. 
	\begin{enumerate}
	\item A TOC fixed point $(x_\toc,0)$ is locally stable if and only if
	\begin{equation}\label{eq:TOC_stable}
		x_\toc < x_\int \text{ and } \beta b x_\toc(1-x_\toc) < 1.
	\end{equation}
	For $\beta \geq \beta_\int$, $x_{\toc3}$ is unstable. For sufficiently large $\beta \geq \hat{\beta}$, $x_{\toc1}$ is stable and $x_{\toc2}$ is unstable.
	\item The fixed point of the form $(x^*,1)$ is unstable.
	\end{enumerate}
\end{proposition}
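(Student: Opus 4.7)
The plan is to evaluate the Jacobian \eqref{eq:Jacobian} at each boundary fixed point and exploit the fact that on either edge $n=0$ or $n=1$, the entry $J_{21}$ carries a factor $n(1-n)$ and vanishes. Consequently the Jacobian becomes upper triangular at every TOC and at any prosperity fixed point, so its eigenvalues are the diagonal entries $J_{11}$ and $J_{22}$, and stability reduces to signing each one.

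For part 1, at a TOC $(x_\toc,0)$ I would read off $J_{22}(x_\toc,0) = \epsilon((1+\theta)x_\toc-1)$, which is negative iff $x_\toc < 1/(1+\theta) = x_\int$. For $J_{11}$ I will simplify using $\partial g/\partial x\big|_{n=0} = b$ together with the fixed-point identity $\rho_\mcc(x_\toc,0)=x_\toc$, which forces $e^{\beta g(x_\toc,0)}/(1+e^{\beta g(x_\toc,0)})^2 = x_\toc(1-x_\toc)$. Substituting into \eqref{eq:Jacobian} collapses it to $J_{11}(x_\toc,0) = \beta b x_\toc(1-x_\toc) - 1$, so $J_{11}<0$ is exactly the second condition in \eqref{eq:TOC_stable}. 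This establishes the stated iff characterization.

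To apply it to the three specific roots from Theorem \ref{thm:logit_FPs}, I would first note that Assumption \ref{assume:theta} gives $x_\int > 1/2$. Since $x_{\toc3}(\beta_\int)=x_\int$ and $x_{\toc3}$ is strictly increasing in $\beta$, we have $x_{\toc3}(\beta)\geq x_\int$ for all $\beta\geq \beta_\int$, so $J_{22}\geq 0$ and $x_{\toc3}$ is unstable. For $\beta\geq \hat{\beta}$, the two smaller roots satisfy $x_{\toc1}\leq x_{\toc2} < 1/2 < x_\int$, so the first stability condition holds for both, and the distinction comes from the slope condition. Setting $\phi(x) := e^{\beta(bx+d)}/(1+e^{\beta(bx+d)})$, the TOC roots are the fixed points of $\phi$, and $J_{11}=\phi'(x_\toc)-1$. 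Since $b>0$ under Assumption \ref{assume:A0}, $\phi$ is a monotone sigmoid with $\phi(0)>0$ and $\phi(1)<1$, so the sign of $\phi(x)-x$ must alternate $+,-,+,-$ across the three crossings. Reading off the slope at each crossing then gives $\phi'(x_{\toc1})<1$ and $\phi'(x_{\toc2})>1$, so $x_{\toc1}$ is a stable node while $x_{\toc2}$ is a saddle. For part 2 at a prosperity fixed point $(x^*,1)$, the same triangular structure produces the eigenvalue $J_{22}(x^*,1)=\epsilon(1-(1+\theta)x^*)$; Theorem \ref{thm:logit_FPs} gives $x^*<1/2$ and Assumption \ref{assume:theta} gives $1+\theta<2$, so $(1+\theta)x^*<1$, $J_{22}>0$, and $(x^*,1)$ is unstable regardless of $J_{11}$.

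The cleanest step is collapsing the exponential ratio in $J_{11}$ to $x_\toc(1-x_\toc)$ via the fixed-point identity, which sidesteps any need to solve the transcendental equation for $x_\toc$ in closed form. The main obstacle is formally justifying the alternating-slope claim used to separate $x_{\toc1}$ from $x_{\toc2}$; this follows from the intermediate-value theorem applied to $\phi(x)-x$ together with the boundary signs $\phi(0)>0$ and $\phi(1)<1$, but the sign-chase across the three roots should be stated carefully rather than waved at.
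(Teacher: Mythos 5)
Your derivation of the iff condition \eqref{eq:TOC_stable}, the instability of $x_{\toc3}$ for $\beta\geq\beta_\int$, and the instability of the prosperity point $(x^*,1)$ match the paper's proof step for step: the triangular Jacobian, the collapse of the exponential ratio to $x_\toc(1-x_\toc)$ via the fixed-point identity, and the comparison $x^*<\half<x_\int$ are exactly the ingredients used there. Where you genuinely diverge is the classification of $x_{\toc1}$ and $x_{\toc2}$. The paper argues asymptotically: it shows the relevant quantity for $x_{\toc1}$ tends to $0$ as $\beta\to\infty$ (using $\limb x_{\toc1}=0$ and $d<0$) and that $\limb \beta b\, x_{\toc2}(1-x_{\toc2})=+\infty$ (using $\limb x_{\toc2}=|\dSP0|/\dRT0$), which is precisely why the proposition only claims the result for \emph{sufficiently large} $\beta\geq\hat{\beta}$. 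Your fixed-$\beta$ slope argument is stronger --- if completed it classifies $x_{\toc1}$ and $x_{\toc2}$ for every $\beta>\hat{\beta}$ --- but as written it has a gap you half-acknowledge: the intermediate value theorem plus the boundary signs $\phi(0)>0$, $\phi(1)<1$ do not by themselves force the sign pattern $+,-,+,-$ (a root at which $\phi(x)-x$ fails to change sign is not excluded by merely counting three roots), and even at a genuine sign change you only obtain $\phi'(x_{\toc1})\leq 1$ and $\phi'(x_{\toc2})\geq 1$, whereas linearized stability needs the strict inequalities. Both issues are closed by the geometry already established in the proof of Theorem \ref{thm:logit_FPs}: on $(0,\half)$ the fixed-point equation reads $bx+d=T_\beta(x)$ with $T_\beta$ strictly concave (Lemma \ref{lem:Tbeta}, item 6), and a line meeting a strictly concave function in two \emph{distinct} points must cross transversally with $T_\beta'(x_{\toc1})>b>T_\beta'(x_{\toc2})$; since $\phi'(x_\toc)=\beta b\,x_\toc(1-x_\toc)$ at a root while $T_\beta'(x)=1/(\beta x(1-x))$, this is equivalent to $\phi'(x_{\toc1})<1<\phi'(x_{\toc2})$. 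With that patch your version actually proves more than the proposition asserts, removing the ``sufficiently large'' qualifier.
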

This result implies that only TOC and interior fixed points can be  stable.
\begin{proof}
	Let us first consider any TOC fixed point $(x_\toc,0)$. The Jacobian  is
	\begin{equation}
	\begin{bmatrix} \beta b x_\toc(1-x_\toc) - 1 & \beta(a x_\toc + c) x_\toc(1-x_\toc) \\ 0 & \epsilon( (1+\theta)x_\toc - 1 ) 	\end{bmatrix}
	\end{equation}
	Since this is an upper triangular matrix, the eigenvalues are its diagonal entries. The fixed point is thus stable under the condition \eqref{eq:TOC_stable}. By Theorem \ref{thm:logit_FPs}, $x_{\toc3}$ is unstable for all $\beta \geq \beta_\int$ since $x_{\toc3}(\beta) \geq x_\int$. Now, let us consider the other two TOC fixed points $x_{\toc1}, x_{\toc2}$ for $\beta \geq \hat{\beta}$ (Item 3 of Theorem \ref{thm:logit_FPs}). The first eigenvalue for both fixed points are negative, since  $x_{\toc1}, x_{\toc2} < \half < x_\int$, and so their stability rests on the sign of the second eigenvalue. Any TOC fixed point solves the equation
	\begin{equation}
		(1-x_\toc)^2 e^{\beta (b x_\toc + d)}  = x_\toc (1-x_\toc)
	\end{equation}
	which simply follows from the equilibrium condition $F_1(x_\toc,0) = 0$. Then the sign of the second eigenvalue is negative if 
	\begin{equation}
		b \beta (1-x_\toc)^2 e^{\beta (b x_\toc + d)} < 1
	\end{equation}
	Focusing on $x_{\toc1}$, we recall that $\limb x_{\toc1} = 0$. For large $\beta$, we have
	\begin{equation}\label{eq:xtoc1_limit}
		\begin{aligned}
			&\limb b \beta (1-x_{\toc1})^2 e^{\beta (b x_{\toc1} + d)} \\ 
			&= b \left(\limb \beta (1-x_{\toc1}(\beta))^2 \right) \cdot \left( \limb e^{\beta (b x_{\toc1}(\beta) + d)}\right) \\
		\end{aligned}
	\end{equation}
	The first limit above can be written as the product of limits $(\limb \beta)\cdot \limb (1-x_{\toc1}(\beta))^2 = \limb \beta$. The second limit above can be written as
	\begin{equation}
		\begin{aligned}
		&e^{(\limb \beta)\cdot (\limb (bx_{\toc1} + d) )} \\
		&= e^{(\limb \beta) d} \\
		&= \limb e^{\beta d}
		\end{aligned}
	\end{equation}
	We can thus re-express \eqref{eq:xtoc1_limit} as
	\begin{equation}
		\begin{aligned}
			b (\limb \beta)\cdot (\limb e^{\beta d }) = b \limb \beta e^{\beta d} = 0.
		\end{aligned}
	\end{equation}
	where the last equality follows since $d < 0$ (Assumption \ref{assume:A0}) and from the fact that $e^{\beta d}$ is a negiglible function (Ch. 3 \cite{katz2014introduction}). Therefore, there exists a $\beta_1 \geq \hat{\beta}$ such that for all $\beta \geq \beta_1$, the fixed point $(x_{\toc1},0)$ is locally stable.
	
	Now, we consider $x_{\toc2}$. Recall $\limb x_{\toc2} = \frac{|\dSP0|}{\dRT0}$. For large $\beta$,
	\begin{equation}
		\begin{aligned}
	 		\limb b\beta x_{\toc2} (1- x_{\toc2}) &= b \left( \limb \beta \right)  \left( \limb x_{\toc2} (1- x_{\toc2})  \right) \\
			&= b\left( \limb \beta \right) \frac{|\dSP0|}{\dRT0} (1 - \frac{|\dSP0|}{\dRT0}) \\
			& = +\infty
		\end{aligned}
	\end{equation}
	since $b > 0$. Thus, there exists a $\beta_2 \geq \hat{\beta}$ such that for all $\beta \geq \beta_2$, the  fixed point $(x_{\toc2},0)$ is unstable.
	
	Now, we consider the fixed point $(x^*,1)$. The Jacobian evaluated here is
	\begin{equation}
		 \begin{bmatrix} \beta (a+b) x(1-x) - 1 & \beta(a x + c) x(1-x) \\ 0 & \epsilon (1-(1+\theta)x)  	\end{bmatrix}
	\end{equation}
	The second eigenvalue is positive if and only if $x^* <  x_\int$. It was established in item 4 of Theorem \ref{thm:logit_FPs} that $x^* < \half < x_\int$. Consequently, the fixed point cannot be stable for any $\beta$.
\end{proof}
\section{Bifurcations from logit learning}\label{sec:bif}

In this section, we take the rationality level $\beta > 0$ as a bifurcation parameter of the logit system \eqref{eq:logit_system}. We study the stability properties of the interior fixed point as $\beta$ increases. Notably, we establish a critical value ${\beta_h}$ at which it undergoes a Hopf bifurcation. That is, for a neighborhood of values $\beta > {\beta_h}$, the system exhibits a stable limit cycle around the fixed point $(x_\int,n_\int)$ whose amplitude grows in $\beta$. 

\subsection{Bifurcation of limit cycles}

\begin{figure}
	\centering
	\includegraphics[scale=0.33]{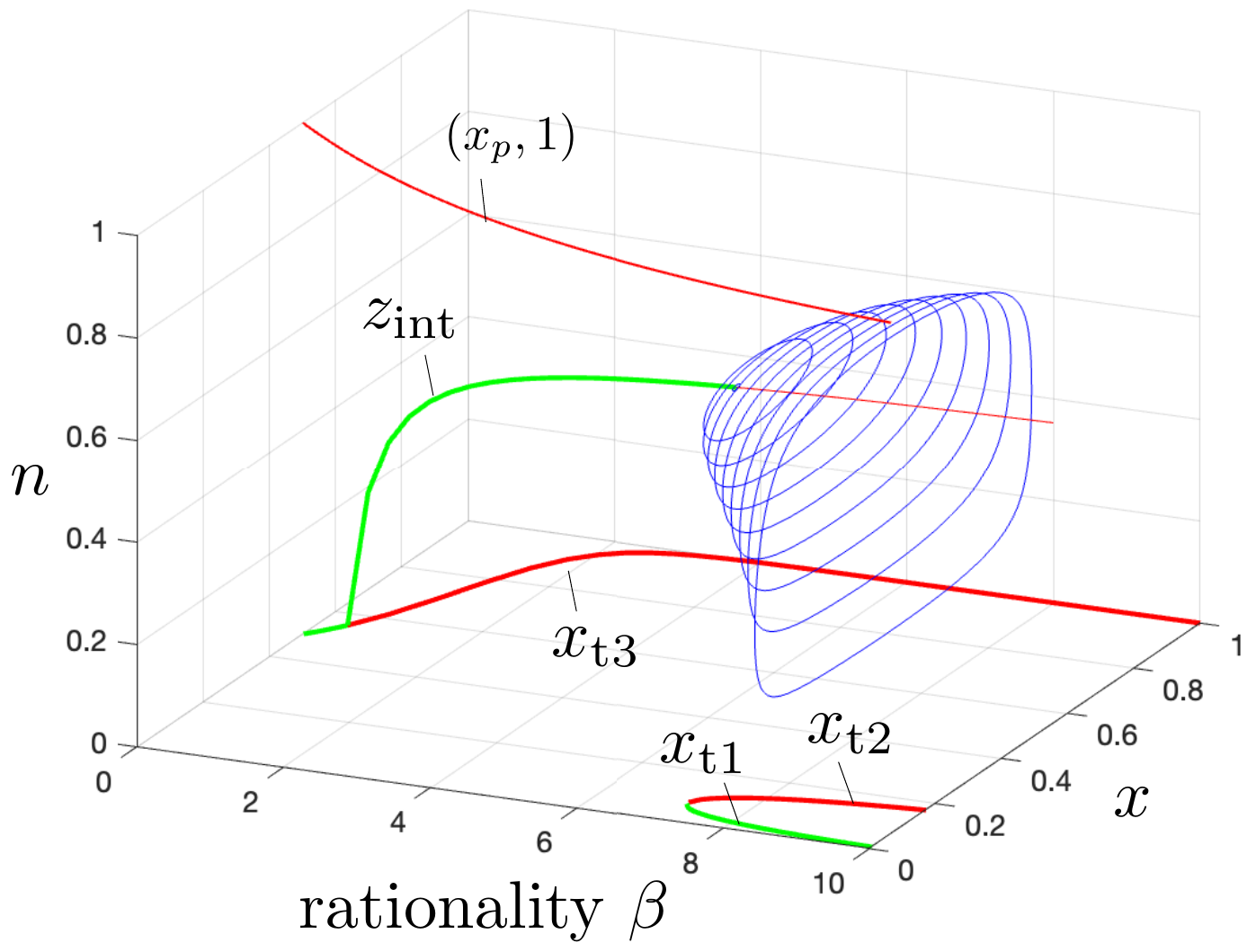}
	\caption{Bifurcation plot of the logit system \eqref{eq:logit_system}. Fixed parameters are: $\epsilon = 0.5$, $\dTR1 = 0.5$, $\dPS1 = 0.25$, $\dSP0 = -0.5$, $\dRT0 = 1.5$, $\theta = 0.8$. Thresholds are: $\beta_\int = 0.3651$, ${\beta_h} = 5.6767$. The green lines indicate stable, isolated fixed points. The red lines indicate unstable isolated fixed points. The blue lines are stable limit cycles. They exist only in the interval $\beta \in ({\beta_h},\beta_u)$, where $\beta_u \approx 7.84$. For $\beta > \beta_u$, The TOC fixed point $x_{\toc1}$ appears globally stable.}
	\label{fig:bifurcation}
\end{figure}

\begin{figure*}
	\centering
	\includegraphics[scale=0.25]{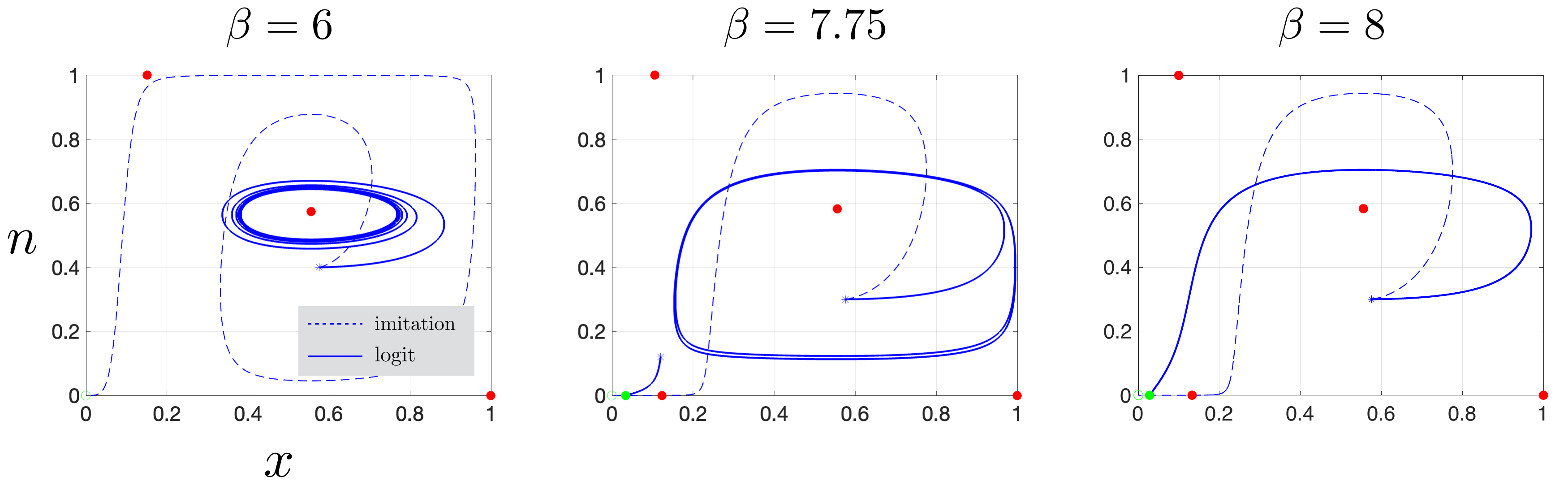}
	\caption{Trajectories of the system dynamics. The solid blue lines correspond to the logit dynamics \eqref{eq:logit_system}. The dashed blue lines correspond to the standard imitative dynamics \eqref{eq:standard}. The blue asterisk is the initial condition. The filled circles are fixed points of the logit system, where red indicates it is unstable and green indicates it is locally stable. The open green circle at $(0,0)$ is the TOC fixed point of the imitative dynamics. Here, we use the same parameter setup as in Figure \ref{fig:bifurcation}, where the bifurcation value is $\beta_h = 5.67$. (Left) We see the trajectory converges to a stable limit cycle when $\beta = 6$. (Center) The system exhibits bistability when $\beta = 7.75$ -- some initial conditions are attracted to $x_{\toc1}$, while others are attracted to a stable limit cycle. (Right) For $\beta = 8$, the limit cycle has disappeared and the only stable fixed point of the system is $x_{\toc1}$.}
	\label{fig:portrait}
\end{figure*}

We first state the Hopf bifurcation theorem below.
\begin{theorem}[Hopf Bifurcation Theorem (Ch. 3 \cite{guckenheimer2013nonlinear})]\label{thm:Hopf}
	Consider a dynamical system $\dot{z} = F(z;\beta)$, where $z \in \mbb{R}^2$ is the state and $\beta \in \mbb{R}$ is a bifurcation parameter. Suppose the system has an equilibrium $(z^*;\beta^*)$ (where $z^*$ may depend on $\beta$) at which the following properties hold:
	\begin{enumerate}
		\item The Jacobian evaluated at $(z^*;\beta^*)$ has a pair of pure imaginary eigenvlaues $\lambda(\beta^*) = \pm\omega i$.
		\item $\frac{d\text{Re}(\lambda(\beta))}{d\beta} \bigr\rvert_{\beta^*} \neq 0$.
	\end{enumerate}
	Then the dynamics undergo a Hopf bifurcation at $(z^*;\beta^*)$, which induces a family of periodic solutions in a sufficiently small neighborhood of $(z^*;\beta^*)$.
\end{theorem}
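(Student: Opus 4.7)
The statement is the classical Hopf bifurcation theorem attributed to Guckenheimer-Holmes, so in practice the most direct ``proof'' is simply a reference to that monograph. I will sketch how I would reconstruct the argument if I had to.

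First I would translate $z^*(\beta)$ to the origin by a smooth change of variables so that $F(0;\beta)=0$, and write the system as $\dot{w} = A(\beta) w + g(w;\beta)$ with $g = O(|w|^2)$. By hypothesis (1), $A(\beta^*)$ has eigenvalues $\pm i\omega$; smoothness of simple eigenvalues together with hypothesis (2) implies that for $\beta$ near $\beta^*$ the eigenvalues split as $\alpha(\beta) \pm i\omega(\beta)$ with $\alpha(\beta^*)=0$ and $\alpha'(\beta^*) \neq 0$, so the spectrum crosses the imaginary axis transversally at $\beta^*$.

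Next I would put $A(\beta)$ into real canonical form and apply successive near-identity polynomial transformations to bring the system into Poincare normal form, which in polar coordinates has the shape
\begin{equation*}
\dot{r} = \alpha(\beta) r + a(\beta) r^3 + O(r^5), \qquad \dot{\phi} = \omega(\beta) + b(\beta) r^2 + O(r^4).
\end{equation*}
Assuming the first Lyapunov coefficient $a(\beta^*)$ is nonzero (a generic non-degeneracy condition), the truncated radial equation has a nontrivial zero $r(\beta) = \sqrt{-\alpha(\beta)/a(\beta)}$ on one side of $\beta^*$, giving a circular periodic orbit of the truncation. To promote this to a periodic orbit of the full system, I would apply the implicit function theorem to the displacement map $P(r;\beta) - r$, where $P$ is the Poincare return map on the transverse section $\{\phi = 0\}$; the leading-order expansion inherited from the normal form has a simple zero at $r(\beta)$, yielding a smooth branch of periodic orbits whose amplitude scales as $\sqrt{|\beta - \beta^*|}$.

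The main obstacle is the normal-form computation itself, which requires careful bookkeeping of resonant monomials to actually extract $a(\beta^*)$; the implicit function theorem step, the transversality input, and the amplitude scaling are all immediate once the normal form is in hand. In the context of this paper, however, the theorem is invoked as a black box --- the substantive work lies not in proving Theorem \ref{thm:Hopf} but in verifying its two hypotheses for the Jacobian \eqref{eq:Jacobian} evaluated at $(x_\int,n_\int)$, which should reduce to locating the $\beta$-root of the trace while confirming that the determinant remains positive there.
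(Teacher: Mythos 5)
The paper does not prove this statement at all: it is quoted verbatim as a known result from Guckenheimer--Holmes and used as a black box, with all of the substantive work deferred to the subsequent theorem where the two hypotheses are verified for the Jacobian at $(x_\int,n_\int)$. Your reading of the situation is therefore exactly right, and your sketch of the standard argument (translate the equilibrium to the origin, use transversality to get the eigenvalues crossing the imaginary axis, reduce to Poincar\'e normal form, and apply the implicit function theorem to the displacement map of the return map) is the textbook route and is correct in outline. One caveat worth flagging: your implicit-function-theorem step requires the first Lyapunov coefficient $a(\beta^*)$ to be nonzero so that the displacement map has a simple zero at $r(\beta)$, but that nondegeneracy is \emph{not} among the hypotheses of the statement as given here --- the theorem as stated only assumes the eigenvalue conditions, and the weaker conclusion of ``a family of periodic solutions'' holds without it (via Lyapunov--Schmidt or the Alexander--Yorke global argument), though one then loses the unique branch and the $\sqrt{|\beta-\beta^*|}$ amplitude scaling. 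The paper itself is consistent with this reading, since it separately remarks that stability of the bifurcated cycles depends on the sign of $\ell_1$, which it does not compute. So your proposal proves a slightly stronger, generic version of the theorem under a slightly stronger hypothesis; as a reconstruction of the cited result it is fine, and your closing observation about where the paper's actual work lies is accurate.
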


The properties of the interior fixed point  \eqref{eq:logit_FP} is summarized in the following result.

\begin{theorem}
	The interior fixed point $\bs{z}_\int(\beta) = (x_\int(\beta), n_\int(\beta))$  is locally stable for $\beta \in [\beta_\int,{\beta_h})$, where
	\begin{equation}
		{\beta_h} := (a\bar{n} + b)^\inv\left( \theta(1+\theta^\inv)^2 + \frac{a(1+\theta)}{D}\log \theta^\inv \right).
	\end{equation}
	At the value $\beta = {\beta_h}$, it undergoes a Hopf bifurcation where its eigenvalues are purely imaginary, and it becomes an unstable focus in a vicinity $\beta > {\beta_h}$.
\end{theorem}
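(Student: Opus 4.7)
The plan is to analyze the Jacobian $J$ of \eqref{eq:logit_system} at $\bs{z}_\int(\beta)$ and extract both local stability and the Hopf condition from its trace and determinant. The key preliminary observation is that $F_2(\bs{z}_\int)=0$ forces $\theta x_\int = 1-x_\int$, so $J_{22}=0$ in \eqref{eq:Jacobian}. Moreover, $F_1(\bs{z}_\int)=0$ gives $e^{\beta g(\bs{z}_\int)} = x_\int/(1-x_\int) = 1/\theta$, whence $e^{\beta g}/(1+e^{\beta g})^2 = x_\int(1-x_\int) = \theta/(1+\theta)^2$. The Jacobian at the interior fixed point therefore simplifies to
\begin{equation*}
J(\bs{z}_\int) = \begin{bmatrix} \tfrac{\beta\theta(a n_\int + b)}{(1+\theta)^2} - 1 & \tfrac{\beta\theta(a x_\int + c)}{(1+\theta)^2} \\ \epsilon n_\int(1-n_\int)(1+\theta) & 0 \end{bmatrix}.
\end{equation*}

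A direct expansion of $a$ and $c$ in terms of the payoff constants yields the identity $a x_\int + c = -D/(1+\theta)$, where $D := \dRT0 + \dTR1 + \theta(\dSP0 + \dPS1)$ is the denominator of $\bar{n}$ in \eqref{eq:nbar}. Since $D>0$ under Assumptions \ref{assume:A1}--\ref{assume:A0} with $\theta<1$, the determinant
$\det J(\bs{z}_\int) = -J_{12}J_{21} = \beta\epsilon D\theta n_\int(1-n_\int)/(1+\theta)^2$
is strictly positive for every $\beta > \beta_\int$. Hence local stability of $\bs{z}_\int$ reduces entirely to the sign of $\text{tr}\,J = J_{11}$.

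Substituting the closed form $n_\int(\beta) = \bar{n} - (1+\theta)\log\theta^\inv/(\beta D)$ from \eqref{eq:logit_FP} makes the trace \emph{affine} in $\beta$:
\begin{equation*}
\text{tr}\,J(\beta) = \tfrac{\theta}{(1+\theta)^2}\Bigl[\beta(a\bar{n}+b) - \tfrac{a(1+\theta)\log\theta^\inv}{D}\Bigr] - 1.
\end{equation*}
Solving $\text{tr}\,J(\beta)=0$ recovers exactly the announced formula for $\beta_h$. Provided the slope $\theta(a\bar{n}+b)/(1+\theta)^2$ is positive, $\text{tr}\,J$ crosses zero once from below: it is negative on $[\beta_\int,\beta_h)$ (giving local asymptotic stability together with $\det J>0$), vanishes at $\beta_h$ (yielding purely imaginary eigenvalues $\pm i\sqrt{\det J(\beta_h)}$), and is positive for $\beta>\beta_h$, producing an unstable focus on the one-sided neighborhood of $\beta_h$ where the discriminant $(\text{tr}\,J)^2 - 4\det J$ remains negative.

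To close the Hopf argument via Theorem \ref{thm:Hopf}, the transversality condition is immediate from the affine structure: $d\,\text{Re}(\lambda)/d\beta\bigr|_{\beta_h} = \tfrac{1}{2}\,d\,\text{tr}\,J/d\beta = \theta(a\bar{n}+b)/(2(1+\theta)^2) \neq 0$. The main technical obstacle is securing the sign condition $a\bar{n}+b>0$: while $b = \dRT0 - \dSP0 > 0$ is guaranteed by Assumption \ref{assume:A0}, the coefficient $a$ is not pinned down in sign by the standing assumptions, so either an auxiliary parameter-regime hypothesis or a case-by-case check (as in the setup of Figure \ref{fig:bifurcation}) is required to ensure monotonicity of the trace. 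Everything else is a routine eigenvalue computation built on the identity $ax_\int + c = -D/(1+\theta)$.
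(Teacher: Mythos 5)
Your argument follows the paper's proof essentially step for step: you simplify the Jacobian at $\bs{z}_\int$ using $e^{\beta g}= x_\int/(1-x_\int)=1/\theta$, reduce stability to $\tr J<0$ and $\det J>0$, show $\det J>0$ via the (correct) identity $ax_\int+c=-D/(1+\theta)$ with $D>0$, observe that the trace is affine in $\beta$ after substituting $n_\int=\bar n-(1+\theta)\log\theta^\inv/(\beta D)$, solve $\tr J=0$ for $\beta_h$, and get transversality for free from the affine structure. The one genuine gap is the point you yourself flag: you leave the sign of $a\bar n+b$ as an unresolved hypothesis, proposing an ``auxiliary parameter-regime assumption'' or a numerical check. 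This is not necessary --- the sign is forced by the standing assumptions. Expanding $a\bar n+b=\bigl(a(\dRT0+\theta\dSP0)+bD\bigr)/D$ and cancelling the terms $(q-p)(p+\theta q)+(p-q)(p+\theta q)=0$ (with $p=\dRT0$, $q=\dSP0$, $r=\dTR1$, $s=\dPS1$) leaves
\begin{equation*}
a\bar n+b=\frac{1+\theta}{D}\left(\dRT0\dPS1-\dSP0\dTR1\right),
\end{equation*}
which is strictly positive because $\dPS1,\dTR1>0$ (Assumption \ref{assume:A1}) and $\dRT0>0>\dSP0$ (Assumption \ref{assume:A0}), while $D>0$ as you already argued. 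Without this identity your proof does not establish that $\tr J$ is increasing in $\beta$, that it crosses zero exactly once, or indeed that the announced $\beta_h$ is positive, so the stated interval of stability $[\beta_\int,\beta_h)$ and the direction of the Hopf crossing are not secured. Closing this single algebraic step makes your proof complete and coincident with the paper's.
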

\begin{proof}
	The Jacobian evaluated at the interior fixed point is
	\begin{equation}
		J_\int := \begin{bmatrix} \frac{\beta\dgdx^*}{\theta(1+\theta^\inv)^2} - 1 & \frac{\beta\dgdn^*}{\theta(1+\theta^\inv)^2}  \\ \epsilon n_\int(1-n_\int)(1+\theta) & 0 	\end{bmatrix}
	\end{equation}
	where for compactness, we write $\dgdx^*$ and $\dgdn^*$ to represent the partial derivatives evaluated at $\bs{z}_\int(\beta)$. The trace is 
	\begin{equation}
		\tr J_\int = \frac{\beta\dgdx^*}{\theta(1+\theta^\inv)^2}  - 1
	\end{equation}
	and the determinant is
	\begin{equation}
		\det J_\int =  - \epsilon \beta \dgdn^* n_\int(1-n_\int)\frac{1+\theta}{\theta(1+\theta^\inv)^2}.
	\end{equation}
	The fixed point is stable if the real parts of its eigenvalues are negative, which is equivalent to the condition that $\tr J_\int < 0$ and $\det J_\int > 0$. We have
	\begin{equation}
		\begin{aligned}
			\tr J_\int < 0 &\iff \beta(a n_\int + b) < \theta(1+\theta^\inv)^2 \\
			&\iff \beta < {\beta_h}
		\end{aligned}
	\end{equation}
	where the second line follows by re-writing $n_\int = \bar{n} - \frac{1+\theta}{D}\log \theta^\inv$ with $D := \dRT0 + \dTR1 + \theta(\dSP0 + \dPS1) > 0$, and observing that $a\bar{n} + b = \frac{1+\theta}{D}(\dRT0\dPS1 - \dSP0\dTR1) > 0$.
	Additionally, 
	\begin{equation}
		\begin{aligned}
			\det J_\int > 0 &\iff -\dgdn^* = -(ax_\int + c) > 0 \\ 
			&\iff -c > a(1+\theta)^\inv \\
			&\iff D > 0.
		\end{aligned}
	\end{equation}
	The sign of $\det J_\int$ is determined only from the payoff parameters, and does not depend on $\beta$. Thus, $\det J_\int > 0$ follows  from Assumption \ref{assume:A0}. This establishes the range of $\beta$ for which $\bs{z}_\int$ is stable. Its eigenvalues are given by
	\begin{equation}
		\begin{aligned}
			\lambda_1(\beta) &:= \frac{\tr J_\int}{2} + \sqrt{\left(\frac{\tr J_\int}{2}\right)^2 - \det J_\int} \\
			\lambda_2(\beta) &:= \frac{\tr J_\int}{2} - \sqrt{\left(\frac{\tr J_\int}{2}\right)^2 - \det J_\int} \\
		\end{aligned}
	\end{equation}
	At the bifurcation point $\beta = {\beta_h}$,  $J_\int$ has a conjugate pair of purely imaginary eigenvalues $\pm i \omega$ with $\omega = \sqrt{\det J_\int} > 0$. Moreover, the rate of change of the eigenvalues' real part is
	\begin{equation}
		\half\frac{\prt (\tr J_\int)}{\prt \beta} = \frac{a\bar{n}+b}{2\theta(1+\theta^\inv)^2} > 0.
	\end{equation}
	Indeed, $\tr J_\int$ is linearly increasing in $\beta$. Therefore, the interior fixed point is an unstable focus (positive real and non-zero imaginary parts) for all values $\beta > {\beta_h}$ that satisfy $\left(\frac{\tr J_\int}{2}\right)^2 < \det J_\int$.
\end{proof}
The Hopf bifurcation at ${\beta_h}$ asserts that a family of periodic cycles are guaranteed to appear for a  neighborhood of values $\beta > {\beta_h}$. Whether these periodic cycles are stable depends on the sign of the first Lyapunov coefficient $\ell_1$ evaluated at $(\bs{z}_\int;{\beta_h})$ (Ch. 3 \cite{guckenheimer2013nonlinear}). If $\ell_1 < 0$, then the bifurcated cycles are stable. The derivation of stability conditions for these cycles will be left for future work. However, we observe through extensive simulations that the bifurcated cycles are stable under the assumed parameter values.

So far, we have established that for $\beta \in [0,\beta_\int)$, the TOC fixed point $x_{\toc3}$ is the only stable fixed point (Proposition \ref{prop:FP_stability}). For $\beta \in [\beta_\int,{\beta_h})$, the interior fixed point $\bs{z}_\int$ is the only stable fixed point (Proposition \ref{prop:FP_stability} and Theorem \ref{thm:Hopf}). At $\beta={\beta_h}$, it bifurcates into an unstable focus and for a vicinity of values $\beta \geq {\beta_h}$, a limit cycle encircles $\bs{z}_\int$ (Theorem \ref{thm:Hopf}). A full bifurcation diagram that summarizes these findings is provided in Figure \ref{fig:bifurcation}.

\subsection{Simulations: the high rationality regime}

Under the parameter regime specified by Assumptions \ref{assume:A1}, \ref{assume:A0}, and \ref{assume:theta}, numerical simulations suggest there is another critical value $\beta_u > {\beta_h}$ for which the limit cycle collapses, and the TOC fixed point $x_{\toc1}$ becomes globally attractive. Indeed, Proposition \ref{prop:FP_stability} has established that $x_{\toc1}$ is stable for sufficiently high $\beta$, and it is the only stable fixed point in the system.

Simulations of system trajectories in the phase space are depicted in Figure \ref{fig:portrait}. In particular, we note that the system can exhibit bistability (center portrait) between the limit cycle and $x_{\toc1}$: initial conditions close to $(x_{\toc1},0)$ will converge to the TOC, and other conditions will converge to the stable limit cycle. Since the interior fixed point does not disappear for high $\beta$, we conjecture that the limit cycle collapses at some value $\beta_u$ for which its $\omega$-limit set touches the basin of attraction of $(x_{\toc1},0)$. Such an analysis is left for future work.

\section{Conclusion and Future Work}
In this paper, we formulated a feedback-evolving system where agents in a population follow a logit revision protocol, which is a boundedly rational learning rule. We analyzed the resulting dynamical outcomes as a function of the rationality parameter $\beta \geq 0$. In increasing order of $\beta$, we identified interval ranges for which the system exhibits 1) a tragedy of the commons (low rationality), 2) a stable interior fixed point, 3) stable limit cycles, and 4) again, a tragedy of the commons (high rationality). Counter-intuitively, high rationality leads to a collapsed environment, whereas moderate levels of rationality can lead to sustainable outcomes. Our analysis of the logit system holds in a parameter regime  where imitative learning leads to a tragedy of the commons. These results demonstrate that boundedly rational behaviors can induce a wide variety of environmental outcomes.

Future work will involve analyzing global stability properties of the system. Additionally, a complete analysis of the Lyapunov coefficient is needed to establish stability of the observed limit cycles, and the precise value of $\beta$ at which the limit cycle dissipates into the TOC outcome is yet to be established. The application of control strategies, e.g. incentivization of cooperation, to control global outcomes will also be studied.


\bibliographystyle{IEEEtran}
\bibliography{library}

\appendices

\appendix


Item 3 of Theorem \ref{thm:logit_FPs} relies on the following technical result.

\begin{lemma}\label{lem:Tbeta}
	The function $T_\beta (x) := \beta^\inv \log(\frac{x}{1-x}): (0,1) \rightarrow \mbb{R}$ possesses the following properties.
	\begin{enumerate}
		\item It is continuous and strictly increasing on $x \in (0,1)$.
		\item $T_\beta(x) < 0$ for $x \in (0,{\half})$, $T_\beta({\half}) = 0$, and $T_\beta(x) > 0$ for $x \in ({\half},1)$.
		\item $\lim_{x \rightarrow 0_+} T_\beta(x) = -\infty$ and $\lim_{x \rightarrow 1_-} T_\beta(x) = +\infty$.
		\item For any fixed $x \in (0,\half)$, $T_\beta(x)$ is strictly increasing in $\beta$, $\limb T_\beta(x) = 0$, and $\lim_{\beta \rightarrow 0} T_\beta(x) = -\infty$.
		\item For any fixed $x \in (\half,1)$, $T_\beta(x)$ is strictly decreasing in $\beta$, $\limb T_\beta(x) = 0$ and $\lim_{\beta \rightarrow 0} T_\beta(x) = +\infty$.
		\item $T_\beta(x)$ is strictly concave on $x\in(0,\half)$ and strictly convex on $x\in(\half,1)$.
	\end{enumerate}
\end{lemma}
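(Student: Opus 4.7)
The lemma compiles six elementary properties of the inverse--logit map $T_\beta(x)=\beta^{-1}\log\!\bigl(x/(1-x)\bigr)$, so the plan is to dispatch them by direct calculus rather than by any conceptual argument. I would organize the proof in the same order as the statement, because later items reuse sign information from the earlier ones.

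First I would treat items 1--3 together, since they all concern the behavior of $T_\beta$ as a function of $x$ for a fixed $\beta>0$. The map $x\mapsto x/(1-x)$ is smooth and strictly increasing on $(0,1)$ (its derivative is $1/(1-x)^2>0$), so composing with $\log$ and multiplying by the positive constant $\beta^{-1}$ gives continuity and strict monotonicity, yielding item 1. For item 2, observe that $x/(1-x)$ equals $1$ exactly at $x=\tfrac12$, is less than $1$ for $x<\tfrac12$, and exceeds $1$ for $x>\tfrac12$; taking logs preserves these inequalities and fixes the zero at $\tfrac12$. Item 3 is an immediate consequence of $\lim_{x\to 0^+} x/(1-x)=0$ and $\lim_{x\to 1^-}x/(1-x)=+\infty$ together with the behavior of $\log$ at $0$ and $\infty$.

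Next I would turn to items 4 and 5, which change the role of the variable: now $x$ is frozen and $\beta$ varies. The key observation is that $T_\beta(x)=\beta^{-1}\,L(x)$ with $L(x):=\log(x/(1-x))$, and by item 2 the sign of $L(x)$ is negative on $(0,\tfrac12)$ and positive on $(\tfrac12,1)$. Since $\beta\mapsto\beta^{-1}$ is strictly decreasing on $(0,\infty)$ with limits $+\infty$ at $0$ and $0$ at $\infty$, multiplying by a negative constant $L(x)$ flips monotonicity and sends the limits to $-\infty$ and $0$ respectively (item 4); multiplying by a positive $L(x)$ preserves monotonicity and gives $+\infty$ at $\beta\to 0$ and $0$ at $\beta\to\infty$ (item 5).

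Finally, for item 6 I would compute derivatives in $x$ with $\beta$ held fixed. Writing $T_\beta(x)=\beta^{-1}[\log x-\log(1-x)]$, one differentiates to get
\begin{equation}
T_\beta'(x)=\frac{1}{\beta\, x(1-x)},\qquad T_\beta''(x)=\frac{2x-1}{\beta\, x^{2}(1-x)^{2}},
\end{equation}
and the sign of $T_\beta''$ is determined entirely by $2x-1$: negative on $(0,\tfrac12)$ and positive on $(\tfrac12,1)$, giving strict concavity and strict convexity respectively. Honestly, there is no real obstacle here; the only thing to be careful about is making sure the sign analysis in items 4 and 5 uses the correct half of the interval, since the direction of monotonicity in $\beta$ hinges on the sign of $L(x)$ established in item 2.
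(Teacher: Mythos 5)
Your proof is correct and follows essentially the same route as the paper's: direct computation of $T'_\beta(x)=\frac{1}{\beta x(1-x)}$ and $T''_\beta(x)=\frac{2x-1}{\beta x^2(1-x)^2}$ for items 1 and 6, and sign analysis of $\log\left(\frac{x}{1-x}\right)$ combined with the monotonicity of $\beta^{-1}$ for items 4 and 5. The only difference is that you supply the elementary details (items 2, 3, and parts of 4--5) that the paper omits for brevity.
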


\begin{proof}
	We omit the proof of some items for brevity. 
	
	\noindent 1) $T_\beta (x)$ is continuous since it is a composition of continuous functions. It is increasing since $T'_\beta(x) = \frac{1}{\beta x(1-x)} > 0$ for all $x \in (0,1)$.

	\noindent 4) For any $x \in (0,\half)$,  $T_\beta(x) < 0$ and it strictly increases to 0 as $\beta \rightarrow \infty$.
	
	\noindent 5) similar argument to 4).
	
	\noindent 6) The second derivative is $T''_\beta(x) = -\frac{1-2x}{\beta x^2(1-x)^2}$, which is negative for $x \in (0,\half)$ and positive for $x \in (\half,1)$.
\end{proof}

We are now ready to provide the proof of Theorem \ref{thm:logit_FPs}.

\begin{figure}
            	\centering
            	\includegraphics[scale=0.33]{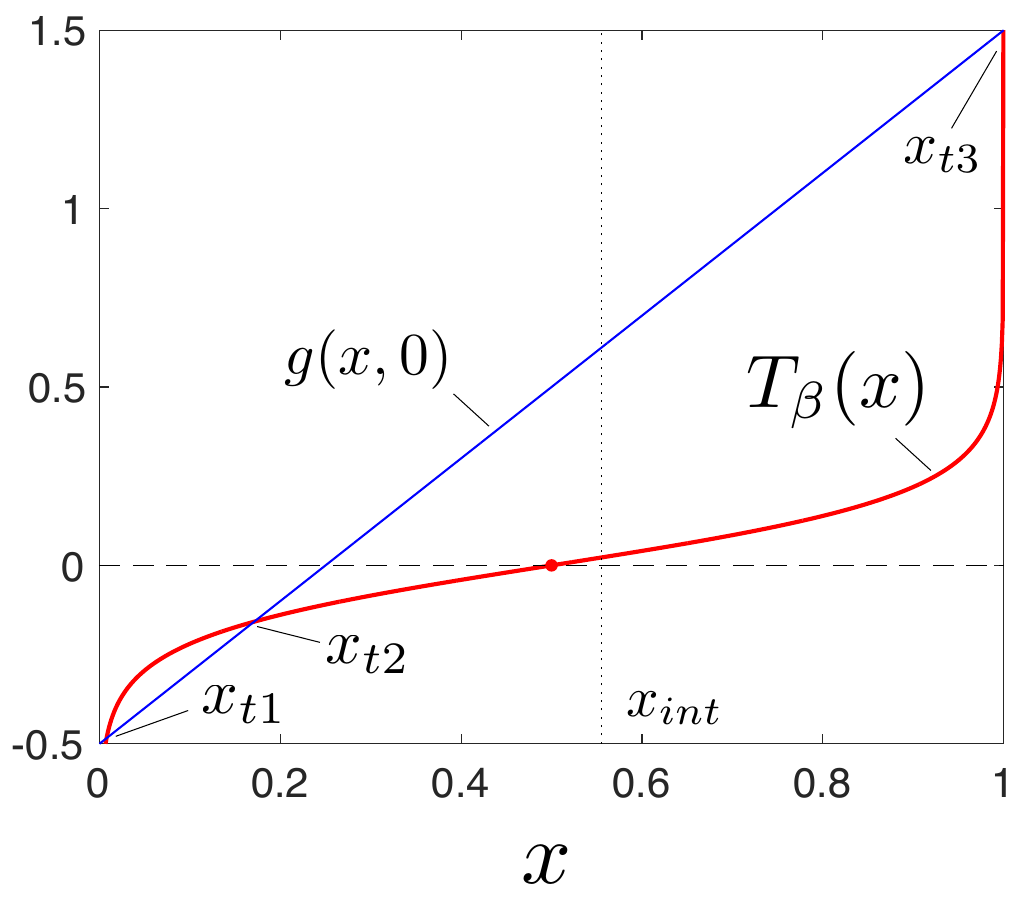}
            	\caption{The TOC fixed points are determined by the solutions of the equation $g(x,0) = bx + d = T_\beta(x)$. }
            	\label{fig:Tbeta}
\end{figure}

\begin{proof}[Proof of Theorem \ref{thm:logit_FPs}]
	Any fixed point $(x^*,n^*) \in \Gamma$ of system \eqref{eq:logit_system} must satisfy $F_1(x,n) = 0$ and $F_2(x,n) = 0$.
	
	\vspace{1mm}
	
		\noindent 1) Suppose $\beta = 0$. Then $F_1(x^*,n^*) = {\half} - x^* = 0$, and so any fixed point must have $x^* = 1/2$. If $\theta \neq 1$, the $F_2({\half},n^*) = n^*(1-n^*)(\frac{1+\theta}{2} - 1) = 0$ if and only if $n^* = 0$ or $n^* = 1$. In the case that $\theta = 1$, then  $F_2({\half},n^*) = 0$ for any $n^* \in [0,1]$. 
		
		\vspace{1mm}
		
		For the rest of the proof, we suppose $\beta > 0$.

		\noindent 2) From $F_2(x^*,n^*) = n^*(1-n^*)((1+\theta)x^* - 1) = 0$, any interior fixed point must have $x^* = x_\int = (1+\theta)^\inv$. Then, the equilibrium environmental level must satisfy the equation
		\begin{equation}
			\begin{aligned}
				\frac{ e^{\beta g((1+\theta)^\inv,n)} }{1 + e^{\beta g((1+\theta)^\inv,n)}} = (1+\theta)^\inv \\
				\Rightarrow g((1+\theta)^\inv,n) = \beta^\inv \log \theta^\inv.
			\end{aligned}
		\end{equation}
		The LHS above is a linear function in $n$, and it can be solved to obtain
		\begin{equation}
			n_\int = \frac{\dRT0 + \theta \dSP0 - \beta^\inv(1+\theta) \log \frac{1}{\theta} }{\dRT0 + \theta\dSP0 + \dTR1 + \theta\dPS1} 
		\end{equation}
		One can verify that the value $n_\int$ is feasible (lies in $(0,1)$) if and only if $\beta > \beta_\int = \frac{(1+\theta)\log \theta^\inv}{\dRT0 + \theta\dSP0}$.
		
		\vspace{1mm}
		
		\noindent 3) A TOC fixed point requires $n=0$, from which we immediately get $F_2(x,0) = 0$. To find $x$, we must solve the equation
		\begin{equation}\label{eq:TOC_equation}
			g(x,0) = bx + d  = \beta^\inv \log \frac{x}{1-x} = T_\beta(x).
		\end{equation}
		A reference plot is shown in Figure \ref{fig:Tbeta}. From Assumption \ref{assume:A0}, $b > 0$, $d < 0$, and $b+d > 0$. Also, the value $x_0 = -d/b = \frac{|\dSP0|}{\dRT0 + |\dSP0|} < 1/2$. From this and from Lemma \ref{lem:Tbeta}, we can deduce the following:
		
		For any $\beta>0$, \eqref{eq:TOC_equation} always has exactly one solution in the interval $(\half,1)$. To see this, suppose there are no solutions in this interval. We have that $g(x,0)$ takes values in the positive range  $(b/2 + d, b+d)$. However, for any $v > 0$, there must exist $x \in (\half,1)$ for which $T_\beta(x) = v$. This follows from Lemma \ref{lem:Tbeta} (item 1, 2, and 3). So, there must be at least one solution.

		Now, suppose there are two solutions in $(\half,1)$, $x_1 < x_2$. Note that there can be at most two solutions between a linear and convex function. Since we have that $g(\half,0) > T_\beta(\half)$, the first intersection point $x_1$ necessarily satisfies $\frac{\prt T_\beta}{\prt x}(x_1) > b$. In other words, the slope of $T_\beta$ must be greater than the slope of $g$ at $x_1$. But then there cannot exist a second solution $x_2$ since $\frac{\prt T_\beta}{\prt x}$ is increasing.
		
		Let us refer to the solution in the interval $(\half,1)$ as $x_{\toc3}(\beta)$. As a function of $\beta$, it is strictly increasing. This follows from item 5 in Lemma \ref{lem:Tbeta}. Moreover, it follows that $\limb x_{\toc3}(\beta) = 1$. 
		
		Now, we focus on the interval $x\in(0,\half)$ where . $T_\beta'(x) = \frac{1}{\beta x(1-x)}$. Observe that since $T_\beta(x)$ is strictly concave and $g(\half,0) > T_\beta(\half)$, \eqref{eq:TOC_equation} will either have no solutions on $(0,\half)$ or two solutions $x_1 \leq x_2 \in (0,\half)$ with equality if and only if $T_\beta'(x_1) = b$. In the latter case, it necessarily holds that $T_\beta'(x_1) > b$ and $T_\beta'(x_2) < b$. We can derive the value $x_b \in (0,\half)$ for which $T_\beta'(x_b) = b$ to be
		\begin{equation}
			x_b = \half - \half\sqrt{1 - \frac{4}{\beta b}}.
		\end{equation}
		We see that no such value can exist unless $\beta > 4/b$. Therefore, there exists a $\hat{\beta} \geq 4/b$ for which the two solutions $x_1 = x_2$. For any $\beta > \hat{\beta}$, the solution $x_1$ is decreasing in $\beta$ towards 0, and the solution $x_2$ is increasing in $\beta$ towards $x_0$ (due to Lemma \ref{lem:Tbeta} item 4).
		
		\vspace{1mm}
		
		\noindent 4) We have that for $n=1$, $F_2(x,1) = 0$. To find $x$, we need to solve
		\begin{equation}\label{eq:prosp_equation}
			g(x,1) = (a+b)x + c+d  = \beta^\inv \log \frac{x}{1-x} = T_\beta(x).
		\end{equation}
		By Assumption \ref{assume:A0},  $g(0,1) = c+d = -\dPS1 < 0$ and $g(1,1) = a+b+c+d = -\dTR1 < 0$. Then $g(x,1) < 0$ for all $x \in (0,1)$. It follows that any solution of \eqref{eq:prosp_equation} must lie in $(0,\half)$. If $a+b < 0$, then there is exactly one such solution because $g(x,1)$ is decreasing while $T_\beta$ is increasing. If $a+b > 0$, then there is still exactly one such solution. This is because $g(\half,1) < T_\beta(0) = 0$, which makes it impossible for the linear function $g(x,1)$ to intersect $T_\beta(x)$ at more than one point. Lastly, the unique solution $x^*\in (0,\half)$ must be decreasing in $\beta$ to 0 (due to Lemma \ref{lem:Tbeta} item 4).
\end{proof}

\end{document}